\def\be{\begin{equation}}
\def\ee{\end{equation}}
\def\ba{\begin{array}}
\def\ea{\end{array}}
\newtheorem{Thm}{Theorem}
\newtheorem{Cor}{Corollary}
\newtheorem{exm}{Example}
\newcommand{\ket}[1]{\left|#1\right>}
\newcommand{\bra}[1]{\left<#1\right|}
\newcommand{\N}{\mathcal{N}}
\newcommand{\C}{\mathcal{C}}
\begin{document}

\title{Generalized monogamy inequalities of convex-roof extended negativity in $N$-qubit systems}

\author{Yanmin Yang$^{1}$ $\thanks{e-mail: yangym929@gmail.com}$, Wei Chen$^{2}$ $\thanks{e-mail: auwchen@scut.edu.cn}$, Gang Li$^{3}$, Zhu-Jun Zheng$^{3}$ $\thanks{e-mail: zhengzj@scut.edu.cn}$
\\[.3cm]
{\small $^{1}$School of Mathematics,  GuangZhou University, Guangzhou 510006, China}\\[.3cm]
{\small $^{2}$School of Computer Science and Network Security,}\\[.3cm]
{\small Dongguan University of Technology, Dongguan ¡¡523808, China}\\[.3cm]
{\small $^{3}$School of Mathematics, South China University of Technology,
Guangzhou 510641, China}\\[.3cm]
}
\date{}
\maketitle

\begin{abstract}
In this paper, we present some generalized monogamy inequalities based on negativity and convex-roof extended negativity (CREN). These monogamy relations are satisfied by the negativity of $N$-qubit quantum systems $ABC_1\cdots C_{N-2}$, under the partitions  $AB|C_1\cdots C_{N-2}$ and $ABC_1|C_2\cdots C_{N-2}$. Furthermore, the $W$-class states are used  to test these generalized monogamy inequalities.
\end{abstract}

{PACS numbers:} 03.67.Mn, 03.65.Ud

{Key words:} Generalized monogamy inequalities, Negativity, Convex-roof extended negativity

\bigskip

One of the most fundamental properties of quantum correlations is that they are not shareable when distributed among many parties.
This property distinguishes the quantum correlations from classical correlations. A simple example is a pure maximally entangled state shared
between Alice and Bob. This state cannot share any additional correlation (classical or quantum) with other parties. The composite system
with a third party, say Carol, can only be a tensor product of the state of Alice and Bob with the state of Carol. This property has been
called the monogamy of entanglement and it means that the monogamy relation of entanglement is a way to characterize the different types
of entanglement distribution. The monogamy relations give rise to structures of entanglement in multipartite setting and it is important for
many tasks in quantum information theory, particularly, in quantum key distribution \cite{Paw} and quantum correlations \cite{Stre, Bai} like quantum
discord\cite{Oll}.

Although it has been shown that quantum correlation measures and entanglement measures cannot satisfy the traditional monogamy relations,
it has been shown that it does satisfy the squared concurrence $C^2$ \cite{Osb, Ye, Zhu, Luo} and the squared entanglement of formation  $E^2$  \cite{Zhu, Luo, Bai1, Bai2, Oli}. Other useful entanglement measures are negativity \cite{Vid} and convex-roof extended negativity(CREN) \cite{Lee}. The authors in \cite{Ou} showed that the monogamy inequality holds in terms of squared negativity for three-qubit states and the author in \cite{He} showed a monogamy relation conjecture on squared negativity for tripartite systems. Kim et al. showed that the squared CREN follows the monogamy inequality \cite{Kim}.

In this paper, we study the general monogamy inequalities of CREN in multi-qubit systems. We first recall some basic concepts of entanglement measures.
Then we find that the generalized monogamy inequalities always hold based on negativity and CREN in $N$-qubit systems under the partitions $AB|C_1\cdots C_{N-2}$ and $ABC_1|C_2\cdots C_{N-2}$. Detailed examples for $W$-class states are given to test the generalized monogamy inequalities.

\medskip

Given a bipartite pure state $|\psi\rangle_{AB}$ in a $d \otimes d^{'} (d\leq d^{'})$ quantum system, its concurrence, $C(|\psi\rangle_{AB})$ is defined as \cite{Woo}
\begin{equation}\label{Concur}
\mathcal{C}(|\psi\rangle_{AB}) = \sqrt{2[1-Tr(\rho_{A}^{2})]} = \sqrt{2[1-Tr(\rho_{B}^{2})]},
\end{equation}
where $\rho_{A}$ is reduced density matrix by tracing over the subsystem $B,$ $\rho_{A} = Tr_{B}(|\psi\rangle_{AB}\langle\psi|)$ (and analogously for $\rho_{B}$).

For any mixed state $\rho_{AB},$ the concurrence is given by the minimum average concurrence taken over all decompositions $\{p_i,\ket{\psi_i}\}$ of $\rho_{AB},$
the so-called convex roof \cite{Uhl}
\begin{equation}
\mathcal{C}(\rho_{AB}) = \min_{\{p_{i},|\psi_{i}\rangle\}}\sum_{i}p_{i}\mathcal{C}(|\psi_{i}\rangle).
\end{equation}

The concurrence of assistance (COA)  of any mixed state $\rho_{AB}$ is defined as \cite{Lau}
\begin{eqnarray}
\mathcal{C}_{a}(\rho_{AB})  &=& \max_{\{p_{i},|\psi_{i}\rangle\}}\sum_{i}p_{i}\mathcal{C}(|\psi_{i}\rangle),
\end{eqnarray}
where the maximum is taken over all possible pure state decompositions $\{p_{i},|\psi_{i}\rangle\}$ of $\rho_{AB}$.

If $\rho_{AB}$ be a two-qubit state, then the COA is defined by \cite{Lau}, \cite{YS}
\begin{eqnarray}
  \mathcal{C}_{a}(\rho_{AB})  &=& Tr(\sqrt{\rho_{AB}\widetilde{\rho}_{AB}})\\ \label{Ca}
  &=& \sum_i \lambda i,
\end{eqnarray}
where $\widetilde{\rho}_{AB}=(\sigma_y\otimes \sigma_y)\rho_{AB}^*(\sigma_y\otimes \sigma_y)$, $\sigma_y$ is Pauli matrix and $\rho_{AB}^*$  is complex conjugation of $\rho_{AB}$ taken in the standard basis, and
\begin{equation}\label{C}
   \mathcal{C}(\rho_{AB}) = \max \{0, \lambda_1-\sum_{i>1}\lambda_i \},
\end{equation}
is the concurrence of $\rho_{AB}$ with $\lambda_i$
being the square roots of the eigenvalues of $\rho_{AB}\widetilde{\rho}_{AB}$ in decreasing
order.

Another well-known quantification of bipartite entanglement is negativity \cite{Vid}, which is based on the positive partial transposition (PPT)
criterion \cite{Per, Hor}. For a bipartite state $\rho_{AB}$ in a $d \otimes d^{'} (d\leq d^{'})$ quantum system, its negativity is defined as
\begin{equation}
\mathcal{N}(\rho_{AB}) = \|\rho_{AB}^{T_A}\|-1,
\end{equation}
where $\rho_{AB}^{T_A}$ is the partial transpose with respect to the subsystem $A$ and $\|X\|$ denotes the trace norm of $X,$ i.e. $\|X\| = Tr \sqrt{XX^{\dagger}}.$

In a $d \otimes d^{'} (d\leq d^{'})$ quantum system, if a bipartite pure state $|\psi\rangle_{AB}$ with the Schmidt decomposition,
\begin{equation}
|\psi\rangle_{AB} = \sum_{i=0}^{d-1}\sqrt{\lambda_i}|ii\rangle,~~~ \lambda_i\geq 0, ~~~~\sum_{i=0}^{d-1}\lambda_i = 1,
\end{equation}
then \cite{Kim}
\begin{equation}\label{N}
\mathcal{N}(\rho_{AB}) = 2\sum_{i< j}\sqrt{\lambda_i\lambda_j}.
\end{equation}

To overcome the lack of separability criterion, one modification of negativity is convex-roof extended negativity (CREN), which gives a perfect discrimination of PPT bound entangled states
and separable states in any bipartite quantum system. For a bipartite mixed state $\rho_{AB},$ CREN is defined as
\begin{equation}
\widetilde{\mathcal{N}}(\rho_{AB}) = \min_{\{p_{i},|\psi_{i}\rangle\}}\sum_{i}p_{i}\mathcal{N}(|\psi_{i}\rangle),
\end{equation}
where the minimum is taken over all possible pure state decompositions $\{p_{i},|\psi_{i}\rangle\}$ of $\rho_{AB}$.

Similar to the duality between concurrence and COA, we can also define a dual of CREN, namely convex-roof extended negativity of assistance  (CRENOA), by taking the maximum value of average negativity over all possible pure state decomposition $\{p_{i},|\psi_{i}\rangle\}$ of mixed state $\rho_{AB}$, i.e.
\begin{equation}
\widetilde{\mathcal{N}}_{a}(\rho_{AB}) = \max_{\{p_{i},|\psi_{i}\rangle\}}\sum_{i}p_{i}\mathcal{N}(|\psi_{i}\rangle).
\end{equation}

CREN is equivalent to concurrence for any pure state with Schmidt rank two \cite{Kim}. It follows that
for any two-qubit mixed state $\rho_{AB}$,
\begin{eqnarray}\label{c=n}
\C(\rho_{AB}) =\min_{\{p_{i},|\psi_{i}\rangle\}}\sum_{i}p_{i}\mathcal{C}(|\psi_{i}\rangle)= \min_{\{p_{i},|\psi_{i}\rangle\}}\sum_{i}p_{i}\widetilde{\mathcal{N}}(|\psi_{i}\rangle)= \widetilde{\N}(\rho_{AB}),\\ \label{ca=na}
\C_a(\rho_{AB})=\max_{\{p_{i},|\psi_{i}\rangle\}}\sum_{i}p_{i}\mathcal{C}(|\psi_{i}\rangle)=
\max_{\{p_{i},|\psi_{i}\rangle\}}\sum_{i}p_{i}\widetilde{\mathcal{N}}(|\psi_{i}\rangle)=\widetilde{\N}_a(\rho_{AB}).
\end{eqnarray}

For any $N$-qubit pure state $\ket{\psi}_{A|B_1\cdots B_{N-1}}$, it has been
shown that the concurrence and COA of $\ket{\psi}_{A|B_1\cdots B_{N-1}}$ satisfy
monogamy inequalities \cite{Osb,GBS}:
\begin{equation}
 \sum_{i=1}^{N-1}C^2(\rho_{AB_i}) \leq  C^2(\ket{\psi}_{A|B_1\cdots B_{N-1}}) \leq  \sum_{i=1}^{N-1}C_a^2(\rho_{AB_i}),
\end{equation}
where $\rho_{AB_i}=Tr_{B_1\cdots B_{i-1}B_{i+1}\cdots B_{N-1}}(\ket{\psi}_{A|B_1\cdots B_{N-1}}\bra \psi)$.

Combining   with Eq.(\ref{c=n}) and Eq.(\ref{ca=na}), we have
\begin{equation}\label{NCNa}
 \sum_{i=1}^{N-1}\widetilde{\N}^2(\rho_{AB_i})
  \leq   C^2(\ket{\psi}_{A|B_1\cdots B_{N-1}}) \leq  \sum_{i=1}^{N-1}\widetilde{\N}_a^2(\rho_{AB_i}).
\end{equation}

The concurrence is related to the linear entropy of a state \cite{SF},
\begin{equation}\label{}
 T(\rho)=1-Tr(\rho^2).
\end{equation}
Given a bipartite state $\rho$, $T(\rho)$ has the property \cite{ZGZG},
\begin{equation}\label{T}
   T(\rho_A)+ T(\rho_B)\geq  T(\rho_{AB})\geq  |T(\rho_A)- T(\rho_B)|.
\end{equation}

From the definition of pure state concurrence together with Eq.(\ref{T}), we have
\begin{eqnarray}\label{Cg}
 C^2(\ket{\psi}_{A|BC_1 \cdots C_{N-2}})+C^2(\ket{\psi}_{B|AC_1 \cdots C_{N-2}}) \geq  C^2(\ket{\psi}_{AB|C_1 \cdots C_{N-2}}), \\ \label{Cl}
| C^2(\ket{\psi}_{A|BC_1 \cdots C_{N-2}})-C^2(\ket{\psi}_{B|AC_1 \cdots C_{N-2}})| \leq C^2(\ket{\psi}_{AB|C_1 \cdots C_{N-2}}).
\end{eqnarray}

For an $N$-qubit  pure state $\ket{\psi}_{ABC_1\cdots C_{N-2}}$, the  negativity $\mathcal{N}(|\psi\rangle_{AB|C_1\cdots C_{N-2}})$ of the state $|\psi\rangle_{AB|C_1\cdots C_{N-2}}$, viewed as a bipartite state with partition $AB|C_1\cdots C_{N-2}$, satisfies the following monogamy inequalities.

\begin{Thm}
For any $N$-qubit  pure state $|\psi\rangle_{ABC_1\cdots C_{N-2}},$ we have
\begin{equation}\label{Thm}
\mathcal{N}^2(|\psi\rangle_{AB|C_1\cdots C_{N-2}}) \geq \max \{\sum_{i=1}^{N-2}[\widetilde{\mathcal{N}}^2(\rho_{AC_i})-\widetilde{\mathcal{N}}_{a}^2(\rho_{BC_i})], \sum_{i=1}^{N-2}[\widetilde{\mathcal{N}}^2(\rho_{BC_i})-\widetilde{\mathcal{N}}^2_{a}(\rho_{AC_i})]\},
\end{equation}
where $\rho_{AB} = Tr_{C_1\cdots C_{N-2}}(|\psi\rangle\langle\psi|),$  $\rho_{AC_i} = Tr_{BC_1 \cdots C_{i-1}C_{i+1}\cdots C_{N-2}}(|\psi\rangle\langle\psi|)$ and $\rho_{BC_i} = Tr_{AC_1 \cdots C_{i-1}C_{i+1}\cdots C_{N-2}}(|\psi\rangle\langle\psi|).$
\end{Thm}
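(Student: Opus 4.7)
The plan is to reduce the negativity inequality to a concurrence inequality and then combine the pure-state subtraction inequality (\ref{Cl}) with the CREN/CRENOA monogamy chain (\ref{NCNa}). By the $A\leftrightarrow B$ symmetry of the right-hand side of (\ref{Thm}), it suffices to prove the first of the two sums inside the $\max$; the second follows by relabelling.

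First, I would establish the pure-state domination $\mathcal{N}^2(|\psi\rangle_{AB|C_1\cdots C_{N-2}})\ge C^2(|\psi\rangle_{AB|C_1\cdots C_{N-2}})$. Taking the Schmidt decomposition and using (\ref{N}), one has $\mathcal{N}=2\sum_{k<l}\sqrt{\lambda_k\lambda_l}$ and $C^2=4\sum_{k<l}\lambda_k\lambda_l$, so the bound follows from the elementary $\bigl(\sum_{k<l}\sqrt{\lambda_k\lambda_l}\bigr)^2\ge\sum_{k<l}\lambda_k\lambda_l$. Next, applying (\ref{Cl}) with $AB$ as the left block (and choosing the sign inside the absolute value so that the $A$-bipartition dominates; the other sign is handled by the $A\leftrightarrow B$ swap producing the second branch of the maximum) yields $\mathcal{N}^2(|\psi\rangle_{AB|C_1\cdots C_{N-2}})\ge C^2(|\psi\rangle_{A|BC_1\cdots C_{N-2}})-C^2(|\psi\rangle_{B|AC_1\cdots C_{N-2}})$. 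The monogamy chain (\ref{NCNa}) then sandwiches the two concurrences on the right: with $A$ distinguished it gives $C^2(|\psi\rangle_{A|BC_1\cdots C_{N-2}})\ge\widetilde{\mathcal{N}}^2(\rho_{AB})+\sum_i\widetilde{\mathcal{N}}^2(\rho_{AC_i})$, and with $B$ distinguished $C^2(|\psi\rangle_{B|AC_1\cdots C_{N-2}})\le\widetilde{\mathcal{N}}_a^2(\rho_{AB})+\sum_i\widetilde{\mathcal{N}}_a^2(\rho_{BC_i})$. Subtracting produces the target sum $\sum_i[\widetilde{\mathcal{N}}^2(\rho_{AC_i})-\widetilde{\mathcal{N}}_a^2(\rho_{BC_i})]$ plus a residual $\widetilde{\mathcal{N}}^2(\rho_{AB})-\widetilde{\mathcal{N}}_a^2(\rho_{AB})$.

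The main technical obstacle is precisely this residual: since $\widetilde{\mathcal{N}}\le\widetilde{\mathcal{N}}_a$ it is nonpositive, so it fights the direction of the bound rather than helping it. The cleanest way to finish is a case split on the sign of the right-hand side of (\ref{Thm}): when it is nonpositive the claim is automatic from $\mathcal{N}^2\ge 0$, and when it is positive one has to absorb the residual into the Cauchy--Schwarz slack of the $\mathcal{N}^2\ge C^2$ step (which equals $8\sum_{(k,l)<(k',l')}\sqrt{\lambda_k\lambda_l\lambda_{k'}\lambda_{l'}}$ from expanding the square) or invoke a sharper monogamy estimate on the two-qubit state $\rho_{AB}$. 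Swapping $A$ and $B$ throughout then reproduces the second sum inside the maximum.
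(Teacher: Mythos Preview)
Your route is exactly the paper's route: pass from $\mathcal{N}^2$ to $\mathcal{C}^2$ via the Schmidt decomposition, drop to $|C^2(|\psi\rangle_{A|BC_1\cdots C_{N-2}})-C^2(|\psi\rangle_{B|AC_1\cdots C_{N-2}})|$ by (\ref{Cl}), and then invoke (\ref{NCNa}) on each of the two single-qubit bipartitions. The paper records precisely this chain and attributes the last step to (\ref{NCNa}) with no further comment.

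The residual $\widetilde{\mathcal N}^2(\rho_{AB})-\widetilde{\mathcal N}_a^2(\rho_{AB})$ that you isolate is real: applying the lower half of (\ref{NCNa}) to $C^2(|\psi\rangle_{A|BC_1\cdots C_{N-2}})$ and the upper half to $C^2(|\psi\rangle_{B|AC_1\cdots C_{N-2}})$ necessarily produces a $\rho_{AB}$-term with the wrong sign, and the paper's proof silently drops it. Your proposed cures, however, do not close the gap. The ``Cauchy--Schwarz slack'' in $\mathcal N^2\ge\mathcal C^2$ is $8\sum_{(k,l)<(k',l')}\sqrt{\lambda_k\lambda_l\lambda_{k'}\lambda_{l'}}$ in terms of the Schmidt coefficients of the cut $AB|C_1\cdots C_{N-2}$; there is no mechanism linking this quantity to $\widetilde{\mathcal N}_a^2(\rho_{AB})-\widetilde{\mathcal N}^2(\rho_{AB})$, which depends on the two-qubit spectrum of $\rho_{AB}\tilde\rho_{AB}$, and in particular the slack vanishes whenever the Schmidt rank across $AB|C_1\cdots C_{N-2}$ is two while the residual need not (take any three-qubit state with $C(\rho_{AB})<C_a(\rho_{AB})$). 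Likewise ``a sharper monogamy estimate on $\rho_{AB}$'' is not a proof; you would need an explicit inequality that removes the $\rho_{AB}$ contribution from the upper half of (\ref{NCNa}), and none is stated. The case split ``if the right-hand side is nonpositive, done'' is fine, but the positive branch remains open in your write-up---and, to be fair, it is glossed over in the paper's own argument as well.
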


\begin{proof}

Let $|\psi\rangle_{ABC_1\cdots C_{N-2}}$ be a $N$-qubit  pure state, then we have a Schmidt decomposition
$|\psi\rangle_{AB|C_1\cdots C_{N-2}} = \sum_{i=0}^{3}\sqrt{\lambda_i}|ii\rangle.$ Then from Eq.(\ref{Concur}), we get
\begin{equation}
\mathcal{C}(|\psi\rangle_{AB|C_1\cdots C_{N-2}}) = \sqrt{2(1-Tr\rho_{AB}^{2})}
\end{equation}
where
\begin{equation*}
  \rho_{AB} = Tr_{C_1\cdots C_{N-2}}(|\psi\rangle_{AB|C_1\cdots C_{N-2}}\langle\psi|)
   = \sum_{i=0}^{3}\lambda_i|i\rangle\langle i|.
\end{equation*}
We thus obtain
\begin{equation}
 \mathcal{C}(|\psi\rangle_{AB|C_1C_2\cdots C_{N-2}}) =2 \sqrt{\sum_{i<j}\lambda_i\lambda_j}.
\end{equation}
From Eq.(\ref{N}), we have
\begin{equation}
\mathcal{N}(|\psi\rangle_{AB|C_1C_2\cdots C_{N-2}}) = 2\sum_{i<j}\sqrt{\lambda_i\lambda_j}.
\end{equation}
Consequently, we have
\begin{eqnarray*}
  \mathcal{N}^2(|\psi\rangle_{AB|C_1\cdots C_{N-2}}) & \geq & \mathcal{C}^2(|\psi\rangle_{AB|C_1\cdots C_{N-2}}) \\
  &\geq&  | C^2(\ket{\psi}_{A|BC_1 \cdots C_{N-2}})-C^2(\ket{\psi}_{B|AC_1 \cdots C_{N-2}})| \\
&\geq&\max\{\sum_{i=1}^{N-2}[\widetilde{\mathcal{N}}^2(\rho_{AC_i})-\widetilde{\mathcal{N}}_{a}^2(\rho_{BC_i})], \sum_{i=1}^{N-2}[\widetilde{\mathcal{N}}^2(\rho_{BC_i})-\widetilde{\mathcal{N}}^2_{a}(\rho_{AC_i})]\},
\end{eqnarray*}
where the second inequality is due to   Eq.(\ref{Cl}), the third inequality is due to Eq.(\ref{NCNa}).
\end{proof}

A monogamy-type lower bound of $\mathcal{N}(|\psi\rangle_{AB|C_1\cdots C_{N-2}})$ is given by Theorem 1.  According to the relation between negativity and concurrence, we will give an upper bound of $\mathcal{N}(|\psi\rangle_{AB|C_1\cdots C_{N-2}})$.

\begin{Thm}
For any $N$-qubit pure state $|\psi\rangle_{ABC_1\cdots C_{N-2}},$ we have
\begin{equation}\label{Thm2}
\mathcal{N}^2(|\psi\rangle_{AB|C_1\cdots C_{N-2}}) \leq \frac{r(r-1)}{2}[2\widetilde{\mathcal{N}}_a^2(\rho_{AB})+\sum_{i=1}^{N-2}(\widetilde{\mathcal{N}}_a^2(\rho_{AC_i})+\widetilde{\mathcal{N}}_{a}^2(\rho_{BC_i})],
\end{equation}
where  $r$ is the Schmidt rank  of the pure state $|\psi\rangle_{AB|C_1\cdots C_{N-2}}$, $\rho_{AB} = Tr_{C_1\cdots C_{N-2}}(|\psi\rangle\langle\psi|)$,  $\rho_{AC_i} = Tr_{BC_1 \cdots C_{i-1}C_{i+1}\cdots C_{N-2}}(|\psi\rangle\langle\psi|)$ and $\rho_{BC_i} = Tr_{AC_1 \cdots C_{i-1}C_{i+1}\cdots C_{N-2}}(|\psi\rangle\langle\psi|)$.
\end{Thm}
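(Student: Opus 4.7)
The plan is to upper-bound $\mathcal{N}^2$ by $\frac{r(r-1)}{2}\mathcal{C}^2$ for the pure state across the cut $AB|C_1\cdots C_{N-2}$, and then bound $\mathcal{C}^2$ via the triangle-type relation (Cg) together with the assistance monogamy (NCNa) applied separately to the parties $A$ and $B$.

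First I would take the Schmidt decomposition $|\psi\rangle_{AB|C_1\cdots C_{N-2}} = \sum_{i=0}^{r-1}\sqrt{\lambda_i}|ii\rangle$ of rank $r$. Using the formulas already computed in the proof of Theorem 1, namely $\mathcal{N}(|\psi\rangle_{AB|C_1\cdots C_{N-2}}) = 2\sum_{i<j}\sqrt{\lambda_i\lambda_j}$ and $\mathcal{C}(|\psi\rangle_{AB|C_1\cdots C_{N-2}}) = 2\sqrt{\sum_{i<j}\lambda_i\lambda_j}$, I would apply the Cauchy--Schwarz inequality $\bigl(\sum_{i<j}\sqrt{\lambda_i\lambda_j}\bigr)^2 \leq \binom{r}{2}\sum_{i<j}\lambda_i\lambda_j$, giving
\begin{equation*}
\mathcal{N}^2(|\psi\rangle_{AB|C_1\cdots C_{N-2}}) \;\leq\; \tfrac{r(r-1)}{2}\,\mathcal{C}^2(|\psi\rangle_{AB|C_1\cdots C_{N-2}}).
\end{equation*}

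Next I would bound $\mathcal{C}^2(|\psi\rangle_{AB|C_1\cdots C_{N-2}})$ from above. Relation (Cg) gives
\begin{equation*}
\mathcal{C}^2(|\psi\rangle_{AB|C_1\cdots C_{N-2}}) \;\leq\; \mathcal{C}^2(|\psi\rangle_{A|BC_1\cdots C_{N-2}}) + \mathcal{C}^2(|\psi\rangle_{B|AC_1\cdots C_{N-2}}).
\end{equation*}
Since $|\psi\rangle_{ABC_1\cdots C_{N-2}}$ is pure, I can apply (NCNa) with $A$ (respectively $B$) viewed as the focal qubit and the remaining $N-1$ qubits as the other parties, obtaining $\mathcal{C}^2(|\psi\rangle_{A|BC_1\cdots C_{N-2}}) \leq \widetilde{\mathcal{N}}_a^2(\rho_{AB}) + \sum_{i=1}^{N-2}\widetilde{\mathcal{N}}_a^2(\rho_{AC_i})$ and symmetrically $\mathcal{C}^2(|\psi\rangle_{B|AC_1\cdots C_{N-2}}) \leq \widetilde{\mathcal{N}}_a^2(\rho_{AB}) + \sum_{i=1}^{N-2}\widetilde{\mathcal{N}}_a^2(\rho_{BC_i})$. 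Summing these two and multiplying by $\frac{r(r-1)}{2}$ would yield exactly the claimed bound.

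The only delicate point is the first step, where Cauchy--Schwarz costs the factor $\frac{r(r-1)}{2}$; it is this factor, matching the number of Schmidt pair terms, that determines the shape of the bound and explains why $r$ appears in the statement. Everything after that is a clean two-line assembly of already-displayed inequalities, so I do not anticipate any further obstacle.
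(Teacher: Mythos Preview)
Your proof is correct and follows essentially the same route as the paper: bound $\mathcal{N}^2$ by $\tfrac{r(r-1)}{2}\mathcal{C}^2$ across the cut $AB|C_1\cdots C_{N-2}$, then apply (Cg) followed by the right-hand side of (NCNa) for $A$ and for $B$ separately. The only cosmetic difference is that the paper cites the bound $\mathcal{N}\leq\sqrt{\tfrac{r(r-1)}{2}}\,\mathcal{C}$ from \cite{Elt}, whereas you derive it explicitly via Cauchy--Schwarz on the Schmidt coefficients---which is precisely the content of that cited inequality.
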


\begin{proof}
From Eq. (32) in \cite{Elt},
we have
\begin{equation}\label{1}
\mathcal{N}(|\psi\rangle_{AB|C_1\cdots C_{N-2}}) \leq \sqrt{\frac{r(r-1)}{2}}\mathcal{C}(|\psi\rangle_{AB|C_1\cdots C_{N-2}}).
\end{equation}
In addition, we have the fact that
\begin{eqnarray}
  \C^2(|\psi\rangle_{AB|C_1\cdots C_{N-2}})  &\leq & C^2(\ket{\psi}_{A|BC_1\cdots C_{N-2}})+C^2(\ket{\psi}_{B|AC_1\cdots C_{N-2}}) \\ \label{2}
   &\leq&  2\widetilde{\N}_a^2(\rho_{AB})+\sum_{i=1}^{N-2}(\widetilde{\N}_a^2(\rho_{AC_i})+\widetilde{\N}_{a}^2(\rho_{BC_i})),
\end{eqnarray}
where the first inequality is due to   Eq.(\ref{Cg}), the second inequality is due to the right inequality of Eq.(\ref{NCNa}).

From inequalities Eq.(\ref{1}) and Eq.(\ref{2}), the inequality Eq.(\ref{Thm2}) can be deserved.
\end{proof}

\begin{Cor}
If the Schmidt rank of  pure state $|\psi\rangle_{AB|C_1\cdots C_{N-2}}$ is two, then we have
\begin{equation}\label{Cor1}
\mathcal{N}^2(|\psi\rangle_{AB|C_1\cdots C_{N-2}}) \leq 2\widetilde{\mathcal{N}}_a^2(\rho_{AB})+\sum_{i=1}^{N-2}(\widetilde{\mathcal{N}}_a^2(\rho_{AC_i})+\widetilde{\mathcal{N}}_{a}^2(\rho_{BC_i}),
\end{equation}
where $\rho_{AB} = Tr_{C_1\cdots C_{N-2}}(|\psi\rangle\langle\psi|),$  $\rho_{AC_i} = Tr_{BC_1 \cdots C_{i-1}C_{i+1}\cdots C_{N-2}}(|\psi\rangle\langle\psi|),$ and $\rho_{BC_i} = Tr_{AC_1 \cdots C_{i-1}C_{i+1}\cdots C_{N-2}}(|\psi\rangle\langle\psi|).$
\end{Cor}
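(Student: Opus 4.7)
The plan is to derive the corollary as an immediate specialization of Theorem 2. Substituting $r=2$ into Eq. (\ref{Thm2}) gives $\tfrac{r(r-1)}{2}=1$, so the combinatorial prefactor collapses and the bound (\ref{Thm2}) reduces verbatim to (\ref{Cor1}). Thus, strictly speaking, the corollary follows from Theorem 2 alone, and the proof can be a single line.

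For a more self-contained derivation that avoids re-invoking Eq. (32) of \cite{Elt}, I would instead start from the observation recorded just before Eq. (\ref{c=n}): on any bipartite pure state of Schmidt rank two, CREN coincides with concurrence. Applied to $|\psi\rangle_{AB|C_1\cdots C_{N-2}}$, this yields $\mathcal{N}(|\psi\rangle_{AB|C_1\cdots C_{N-2}})=\mathcal{C}(|\psi\rangle_{AB|C_1\cdots C_{N-2}})$; one can check this directly from Eq. (\ref{N}), since when only $\lambda_0,\lambda_1\neq 0$ both quantities equal $2\sqrt{\lambda_0\lambda_1}$.

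Next I would apply Eq. (\ref{Cg}) to obtain
\begin{equation*}
\mathcal{C}^2(|\psi\rangle_{AB|C_1\cdots C_{N-2}})\leq C^2(\ket{\psi}_{A|BC_1\cdots C_{N-2}})+C^2(\ket{\psi}_{B|AC_1\cdots C_{N-2}}),
\end{equation*}
and then bound the two terms on the right using the right-hand inequality of Eq. (\ref{NCNa}), once with central subsystem $A$ and once with central subsystem $B$. This gives $C^2(\ket{\psi}_{A|BC_1\cdots C_{N-2}})\leq \widetilde{\mathcal{N}}_a^2(\rho_{AB})+\sum_{i=1}^{N-2}\widetilde{\mathcal{N}}_a^2(\rho_{AC_i})$ and $C^2(\ket{\psi}_{B|AC_1\cdots C_{N-2}})\leq \widetilde{\mathcal{N}}_a^2(\rho_{AB})+\sum_{i=1}^{N-2}\widetilde{\mathcal{N}}_a^2(\rho_{BC_i})$. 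Summing these and using $\mathcal{N}^2=\mathcal{C}^2$ delivers exactly (\ref{Cor1}).

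The result is a straightforward corollary, so there is no real obstacle. The only thing worth flagging is that Eq. (\ref{1}), which was the sole place Theorem 2 lost tightness, in fact becomes an equality at $r=2$; this is not needed for the statement but confirms that the self-contained route above and the Theorem 2 route really give the same bound, and that (\ref{Cor1}) cannot be sharpened by revisiting that step.
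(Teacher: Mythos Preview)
Your proposal is correct and matches the paper's approach: Corollary 1 is stated without a separate proof precisely because it is the immediate specialization of Theorem 2 at $r=2$, which is exactly your one-line argument. Your alternative self-contained route is also valid and is really just the proof of Theorem 2 with the observation that Eq.~(\ref{1}) becomes an equality when $r=2$, so it is not a genuinely different method.
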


\begin{exm}
Consider   the $N$-qubit generalized $W$-class states \cite{Kim1}:
\begin{equation}\label{w-state}
|W\rangle_{A_1A_2\cdots A_N} = a_1|10\cdots 0\rangle_{A_1A_2\cdots A_N}+a_2|01\cdots 0\rangle_{A_1A_2\cdots A_N}+\cdots+a_N|00\cdots 1\rangle_{A_1A_2\cdots A_N},
\end{equation}
where $\sum_{i=1}^{N}|a_i|^2=1.$
The state $|W\rangle_{A_1A_2|A_3 \cdots A_N}$, viewed as a bipartite state, has the form
\begin{eqnarray*}
  |W\rangle_{A_1A_2|A_3 \cdots A_N}=\sqrt{|a_1|^2+|a_2|^2}(\frac{a_1}{\sqrt{|a_1|^2+|a_2|^2}}\ket{10}+\frac{a_2}
  {\sqrt{|a_1|^2+|a_2|^2}}\ket{01})\otimes \ket{0\cdots 0}+\\
  (\sum\limits_{i=3}^N |a_i|^2 )^{\frac{1}{2}}\ket{00}\otimes (\frac{a_3}{(\sum\limits_{i=3}^N |a_i|^2 )^{\frac{1}{2}}}\ket{10\cdots 0}+\cdots+\frac{a_N}{(\sum\limits_{i=3}^N |a_i|^2 )^{\frac{1}{2}}}\ket{00\cdots 1}).
\end{eqnarray*}
Hence, $\N^2(|W\rangle_{A_1A_2|A_3 \cdots A_N})=4(|a_1|^2+|a_2|^2)\sum\limits_{i=3}^N |a_i|^2 $.

For any $1\leq i\neq j\leq N$, we have
\begin{equation*}
  \rho_{A_iA_j}=(a_i\ket{10}+a_j\ket{01})(a_i^*\bra{10}+a_j^*\bra{01})+\sum_{k\neq i,j}|a_k|^2\ket{00}\bra{00}.
\end{equation*}
Furthermore, from Eqs. (\ref{Ca}), (\ref{C}), (\ref{c=n}) and (\ref{ca=na}), we have
\begin{equation*}
\widetilde{\N}(\rho_{A_iA_j})=\widetilde{\N}_a(\rho_{A_iA_j})=2|a_i||a_j|.
\end{equation*}
The lower bound of $\N^2(|W\rangle_{A_1A_2|A_3 \cdots A_N})$, that is, the right hand side of Eq.(\ref{Thm}) is equal to $4||a_1|^2-|a_2|^2 |\sum\limits_{i=3}^N |a_i|^2$. And the upper bound in Eq.(\ref{Thm2}) is equal to $8|a_1|^2|a_2|^2+4(|a_1|^2+|a_2|^2)\sum\limits_{i=3}^N |a_i|^2 $.

When either $a_1=0$ or $a_2=0$, the lower bound of $\N^2(|W\rangle_{A_1A_2|A_3 \cdots A_N})$ is equal to upper bound.

For $N=3$, $a_3=\frac{1}{\sqrt{3}}$, suppose $|a_1|^2>|a_2|^2$, then the lower and upper bounds of $\N^2(|W\rangle_{A_1A_2|A_3})$ are shown in the following figure:
\begin{figure}
  \centering
   \begin{subfigure}[b]{0.4\textwidth}
                \centering
                \includegraphics[width=\textwidth]{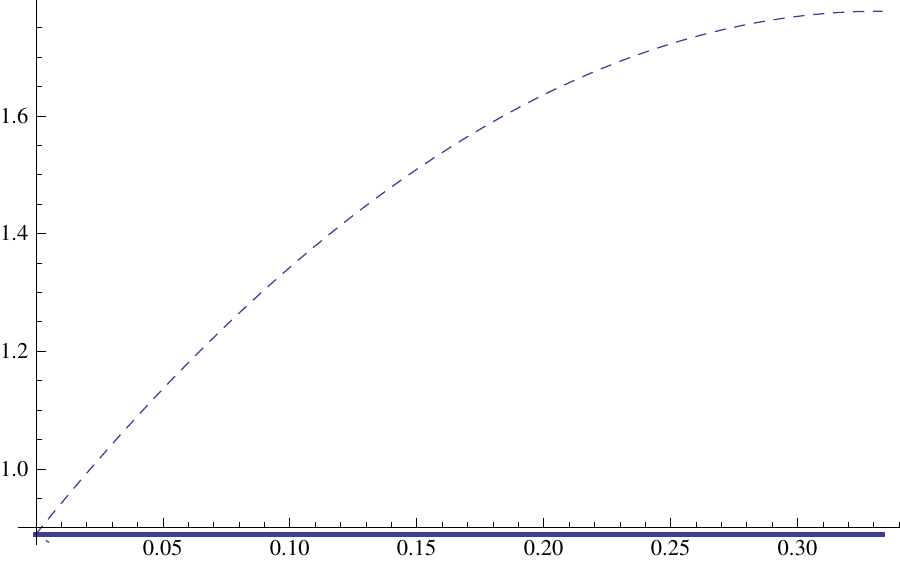}
                \caption{The dotted line is the upper bound of $\N^2(|W\rangle_{A_1A_2|A_3})$, the solid line is $\N^2(|W\rangle_{A_1A_2|A_3})$. And the abscissa represents the value range of $|a_2|^2$ from $0$ to $\frac{1}{3}$.}
                \label{fig1}
        \end{subfigure}%
         \qquad \qquad
  \begin{subfigure}[b]{0.4\textwidth}
                \centering
                \includegraphics[width=\textwidth]{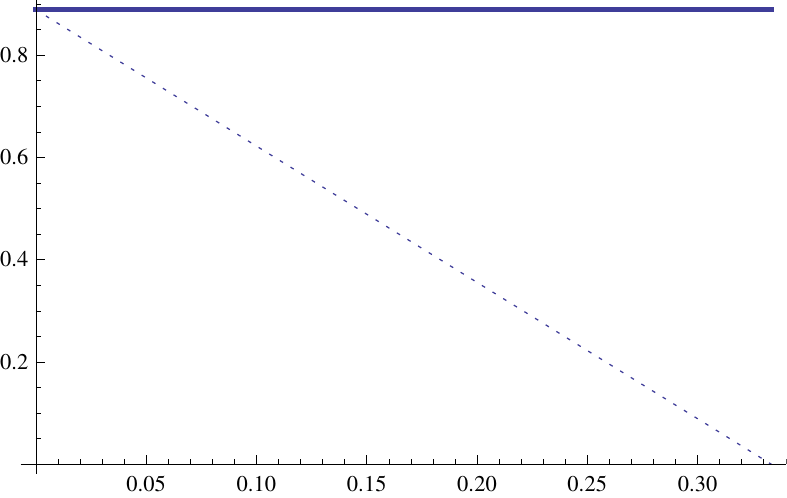}
                \caption{The dotted line is the lower bound of $\N^2(|W\rangle_{A_1A_2|A_3})$, the solid line is $\N^2(|W\rangle_{A_1A_2|A_3})$. And the abscissa represents the value range of $|a_2|^2$ from $0$ to $\frac{1}{3}$.}
                \label{fig2}
        \end{subfigure}%
  \caption{The monogamy relation of $\N^2(|W\rangle_{A_1A_2|A_3})$.}\label{}
\end{figure}

\end{exm}

\begin{Cor}
For any $N$-qubit pure state $\ket{\psi}_{ABC_1\cdots C_{N-2}}$, if the Schmidt rank of state $|\psi\rangle_{AB|C_1\cdots C_{N-2}}$ is two, then
\begin{itemize}
  \item[(i).]  we have monogamy relations
\begin{eqnarray}
  \nonumber
     |\mathcal{N}^2(|\psi\rangle_{A|BC_1\cdots C_{N-2}}) - \mathcal{N}^2(|\psi\rangle_{B|AC_1\cdots C_{N-2}})| & \leq &  \mathcal{N}^2(|\psi\rangle_{AB|C_1\cdots C_{N-2}}) \\ \label{cor2i}
     \leq  \mathcal{N}^2(|\psi\rangle_{A|BC_1\cdots C_{N-2}}) + \mathcal{N}^2(|\psi\rangle_{B|AC_1\cdots C_{N-2}}). &&
  \end{eqnarray}
Specially, if the  systems $B$ and $AC_1\cdots C_{N-2}$ are not entangled, both the two equalities hold.
  \item[(ii).]  the three terms $\mathcal{N}^2(|\psi\rangle_{A|BC_1\cdots C_{N-2}}),$ $\mathcal{N}^2(|\psi\rangle_{B|AC_1\cdots C_{N-2}})$ and   $\mathcal{N}^2(|\psi\rangle_{AB|C_1\cdots C_{N-2}})$ have following relations:
\begin{eqnarray}\label{r1}
  \mathcal{N}^2(|\psi\rangle_{A|BC_1\cdots C_{N-2}})&\leq  & \mathcal{N}^2(|\psi\rangle_{B|AC_1\cdots C_{N-2}})+\mathcal{N}^2(|\psi\rangle_{AB|C_1\cdots C_{N-2}}), \\ \label{r2}
  \mathcal{N}^2(|\psi\rangle_{B|AC_1\cdots C_{N-2}})&\leq & \mathcal{N}^2(|\psi\rangle_{A|BC_1\cdots C_{N-2}})+\mathcal{N}^2(|\psi\rangle_{AB|C_1\cdots C_{N-2}}), \\ \label{r3}
  \mathcal{N}^2(|\psi\rangle_{AB|C_1\cdots C_{N-2}}) &\leq&\mathcal{N}^2(|\psi\rangle_{A|BC_1\cdots C_{N-2}})+ \mathcal{N}^2(|\psi\rangle_{B|AC_1\cdots C_{N-2}}).
\end{eqnarray}
\end{itemize}
\end{Cor}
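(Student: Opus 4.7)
My plan is to reduce the claimed negativity inequalities to the concurrence inequalities (\ref{Cg}) and (\ref{Cl}) by exploiting the fact that negativity and concurrence coincide on any pure bipartite state of Schmidt rank at most two. The first step is to observe that, since $A$ and $B$ are single qubits, the reduced states $\rho_A = Tr_{BC_1\cdots C_{N-2}}(|\psi\rangle\langle\psi|)$ and $\rho_B = Tr_{AC_1\cdots C_{N-2}}(|\psi\rangle\langle\psi|)$ each have rank at most two, so the Schmidt ranks of $|\psi\rangle_{A|BC_1\cdots C_{N-2}}$ and $|\psi\rangle_{B|AC_1\cdots C_{N-2}}$ are automatically bounded by two. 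Together with the hypothesis that $|\psi\rangle_{AB|C_1\cdots C_{N-2}}$ has Schmidt rank two, formula (\ref{N}) collapses to $\mathcal{N} = 2\sqrt{\lambda_0\lambda_1} = C$ on each of the three bipartitions that appear in the statement.

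For part (i), I would substitute these three identifications into the concurrence bounds (\ref{Cg}) and (\ref{Cl}) to obtain (\ref{cor2i}) directly. For the saturation statement, I would note that if $B$ is not entangled with $AC_1\cdots C_{N-2}$ then $|\psi\rangle = |\phi\rangle_B \otimes |\chi\rangle_{AC_1\cdots C_{N-2}}$, whence $\rho_B$ is pure and $\rho_{AB} = \rho_A \otimes |\phi\rangle\langle\phi|$. A direct computation from $C^2(|\psi\rangle) = 2(1 - Tr\rho^2)$ then gives $C^2(|\psi\rangle_{B|AC_1\cdots C_{N-2}}) = 0$ and $C^2(|\psi\rangle_{AB|C_1\cdots C_{N-2}}) = C^2(|\psi\rangle_{A|BC_1\cdots C_{N-2}})$, forcing both inequalities in (\ref{cor2i}) to equalities.

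Part (ii) then falls out of part (i) by elementary algebra: the upper bound in (\ref{cor2i}) is exactly (\ref{r3}), while the lower bound $|\mathcal{N}^2(|\psi\rangle_{A|BC_1\cdots C_{N-2}}) - \mathcal{N}^2(|\psi\rangle_{B|AC_1\cdots C_{N-2}})| \leq \mathcal{N}^2(|\psi\rangle_{AB|C_1\cdots C_{N-2}})$ yields both (\ref{r1}) and (\ref{r2}) after splitting on which of the two one-qubit-versus-rest negativities is the larger and bounding the smaller one trivially. I do not anticipate any significant obstacle; the only point that requires real care is the initial reduction from negativity to concurrence on all three bipartitions, which relies essentially on $A$ and $B$ being single qubits so that their respective Schmidt ranks are automatically at most two.
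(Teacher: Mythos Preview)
Your proposal is correct and matches the paper's implicit argument. The paper states Corollary~2 without a separate proof, relying on precisely the ingredients you name: that $\mathcal{N}=\mathcal{C}$ on pure states of Schmidt rank at most two (which the paper establishes in the proofs of Theorems~1 and~2 via $\mathcal{N}\geq\mathcal{C}$ and Eq.~(\ref{1}) with $r=2$), together with the concurrence inequalities (\ref{Cg}) and (\ref{Cl}); your handling of the saturation case and the derivation of (\ref{r1})--(\ref{r3}) from (\ref{cor2i}) are exactly the intended elementary observations.
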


For the $W$-class states (\ref{w-state}), we have $\mathcal{N}^2(|\psi\rangle_{A_i|A_1\cdots \widehat{A}_i \cdots A_N}) = 4a_i^2(\sum_{j\neq i}^{N}a_j^2)$ for any $1\leq i \leq N$. Clearly, relations (\ref{r1})-(\ref{r3}) are satisfied. And the first inequality (second inequality) of Eq.(\ref{cor2i}) is just the Eq.(\ref{Thm}) in Theorem 1 (Eq.(\ref{Thm2}) in Theorem 2).

The above results can generalized to the negativity $\mathcal{N}(|\psi\rangle_{ABC_1|C_2\cdots C_{N-2}})$ under partition $ABC_1|C_2\cdots C_{N-2}$
for pure state $|\psi\rangle_{ABC_1C_2\cdots C_{N-2}}$.

\begin{Thm}
For any $N$-qubit pure state $|\psi\rangle_{ABC_1C_2\cdots C_{N-2}},$ we have
\begin{equation}\label{thm3 1}
\begin{array}{lll}
 \mathcal{N}^2(|\psi\rangle_{ABC_1|C_2\cdots C_{N-2}})  &\geq & \frac{2}{r(r-1)}\max\{\sum\limits_{i=1}^{N-2}[\widetilde{\mathcal{N}}^2(\rho_{AC_i})-
 \widetilde{\mathcal{N}}_{a}^2(\rho_{BC_i})],\\
 &&\sum\limits_{i=1}^{N-2}[\widetilde{\N}^2(\rho_{BC_i})-
 \widetilde{\N}^2_{a}(\rho_{AC_i})]\}-\sum\limits_{j\in J}\widetilde{\N}^2_{a}(\rho_{C_1j}),
\end{array}
\end{equation}

\begin{equation}\label{thm3 2}
\begin{array}{lll}
 \mathcal{N}^2(|\psi\rangle_{ABC_1|C_2\cdots C_{N-2}})  &\geq&
 \sum\limits_{j\in J}\widetilde{\N}^2(\rho_{C_1j})-\frac{r(r-1)}{2}[2\widetilde{\mathcal{N}}_a^2(\rho_{AB})\\
 &&+\sum\limits_{i=1}^{N-2}(\widetilde{\mathcal{N}}_a^2(\rho_{AC_i})+\widetilde{\mathcal{N}}_{a}^2(\rho_{BC_i})],
\end{array}
\end{equation}
where $J=\{A, B, C_2,\cdots, C_{N-2}\}$, $r$ is the Schmidt rank  of the pure state $|\psi\rangle_{AB|C_1\cdots C_{N-2}}$, and $\rho_{C_1j}$ is the reduced density matrix by tracing over the subsystems except for $C_1$ and $j$.
\end{Thm}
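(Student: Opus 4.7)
The plan is to reduce Theorem~3 to a combination of Theorems~1 and~2, bridged by the linear-entropy inequality Eq.~(\ref{T}) applied at a new bipartition. First I would drop from negativity to concurrence on the left: for any pure bipartite state one has $\mathcal{N}(|\psi\rangle) = 2\sum_{i<j}\sqrt{\lambda_i\lambda_j} \geq 2\sqrt{\sum_{i<j}\lambda_i\lambda_j} = \mathcal{C}(|\psi\rangle)$, so $\mathcal{N}^2(|\psi\rangle_{ABC_1|C_2\cdots C_{N-2}}) \geq \mathcal{C}^2(|\psi\rangle_{ABC_1|C_2\cdots C_{N-2}})$.

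Next I would apply Eq.~(\ref{T}) to the reduced state $\rho_{ABC_1}$ viewed as a bipartite state with the split $AB|C_1$. Because $\mathcal{C}^2 = 2T$ of the corresponding reduced matrix for each cut of $|\psi\rangle$, the lower half of Eq.~(\ref{T}) becomes
\[
\mathcal{C}^2(|\psi\rangle_{ABC_1|C_2\cdots C_{N-2}}) \;\geq\; \bigl|\mathcal{C}^2(|\psi\rangle_{AB|C_1\cdots C_{N-2}}) - \mathcal{C}^2(|\psi\rangle_{C_1|ABC_2\cdots C_{N-2}})\bigr|,
\]
which is exactly the analogue of Eq.~(\ref{Cl}) for the split $AB|C_1$ inside $ABC_1$. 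From here the two bounds in Theorem~3 come from the two choices of sign in the absolute value.

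For Eq.~(\ref{thm3 1}) I would keep the direction making the $AB$-term positive. The $\mathcal{C}^2(|\psi\rangle_{AB|\cdots})$ term is bounded below via Eq.~(\ref{1}), which gives $\mathcal{C}^2 \geq \frac{2}{r(r-1)}\mathcal{N}^2$, and then by the Theorem~1 lower bound on $\mathcal{N}^2(|\psi\rangle_{AB|C_1\cdots C_{N-2}})$; this is where the prefactor $\frac{2}{r(r-1)}$ enters. The subtracted $\mathcal{C}^2(|\psi\rangle_{C_1|ABC_2\cdots C_{N-2}})$ term is bounded above by the right-hand inequality in Eq.~(\ref{NCNa}) applied with $C_1$ as the distinguished qubit and $J=\{A,B,C_2,\ldots,C_{N-2}\}$ as its complement, producing $\sum_{j\in J}\widetilde{\mathcal{N}}_a^2(\rho_{C_1j})$. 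For Eq.~(\ref{thm3 2}) I would pick the opposite direction: the positive $\mathcal{C}^2(|\psi\rangle_{C_1|\cdots})$ term is lower bounded by the left-hand inequality of Eq.~(\ref{NCNa}) to give $\sum_{j\in J}\widetilde{\mathcal{N}}^2(\rho_{C_1j})$, and the subtracted $\mathcal{C}^2(|\psi\rangle_{AB|\cdots})$ term is upper bounded first by $\mathcal{N}^2(|\psi\rangle_{AB|\cdots})$ (pure-state relation $\mathcal{C}\leq\mathcal{N}$) and then by the Theorem~2 estimate, producing the $\frac{r(r-1)}{2}$ factor.

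The main difficulty I anticipate is not conceptual but bookkeeping: the slack between $\mathcal{C}^2(|\psi\rangle_{AB|\cdots})$ and $\mathcal{N}^2(|\psi\rangle_{AB|\cdots})$ forces the proof to lose a factor $r(r-1)/2$ in either direction, so some care is needed to confirm that the constants $\frac{2}{r(r-1)}$ and $\frac{r(r-1)}{2}$ in Eqs.~(\ref{thm3 1}) and~(\ref{thm3 2}) line up with the same Schmidt rank $r$ of $|\psi\rangle_{AB|C_1\cdots C_{N-2}}$ that appears in Theorems~1 and~2. Once that is checked, everything is a direct composition of already-proven inequalities.
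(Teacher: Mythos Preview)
Your proposal is correct and follows essentially the same route as the paper: pass from $\mathcal{N}^2$ to $\mathcal{C}^2$ on the $ABC_1|C_2\cdots C_{N-2}$ cut, apply Eq.~(\ref{T}) at the split $AB|C_1$ to obtain $|C^2(|\psi\rangle_{AB|\cdots})-C^2(|\psi\rangle_{C_1|\cdots})|$, and then handle the two signs exactly as you describe---Eq.~(\ref{1}) plus Theorem~1 and the right of Eq.~(\ref{NCNa}) for Eq.~(\ref{thm3 1}), and the left of Eq.~(\ref{NCNa}) plus Theorem~2 for Eq.~(\ref{thm3 2}). The paper presents the two directions as a case distinction on which of $C^2(|\psi\rangle_{AB|\cdots})$ or $C^2(|\psi\rangle_{C_1|\cdots})$ is larger, but this is the same argument; in the ``wrong'' case for each inequality the right-hand side is automatically nonpositive, so both bounds hold unconditionally.
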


\begin{proof}
 For any $N$-qubit pure state $|\psi\rangle_{ABC_1C_2\cdots C_{N-2}},$ we have a Schmidt decomposition
$|\psi\rangle_{ABC_1|C_2\cdots C_{N-2}} = \sum_{i=0}^{6}\sqrt{\lambda_i}|ii\rangle.$ Then from Eq.(\ref{Concur}), we get
\begin{equation}
\mathcal{C}(|\psi\rangle_{ABC_1|C_2\cdots C_{N-2}}) = \sqrt{2(1-Tr\rho_{ABC_1}^{2})}
\end{equation}
where
$\rho_{ABC_1} = Tr_{C_2\cdots C_{N-2}}(|\psi\rangle_{ABC_1|C_2\cdots C_{N-2}}\langle\psi|)
   = \sum_{i=0}^{6}\lambda_i|i\rangle\langle i|$.

Hence we obtain
\begin{equation}
\mathcal{C}(|\psi\rangle_{ABC_1|C_2\cdots C_{N-2}}) =2 \sqrt{\sum_{i<j}\lambda_i\lambda_j}.
\end{equation}
From Eq.(\ref{N}),
\begin{equation}
\mathcal{N}(|\psi\rangle_{ABC_1|C_2\cdots C_{N-2}}) = 2(\sum_{i<j}\sqrt{\lambda_i\lambda_j}),
\end{equation}
so we get
\begin{eqnarray*}
  \mathcal{N}^2(|\psi\rangle_{ABC_1|C_2\cdots C_{N-2}}) &\geq & \mathcal{C}^2(|\psi\rangle_{ABC_1|C_2\cdots C_{N-2}}) \\
   &=& 2(1-Tr(\rho_{ABC_1}^2))\\
   &\geq& |C^2(\ket{\psi}_{AB|C_1\cdots C_{N-2}})-C^2(\ket{\psi}_{C_1|ABC_2\cdots C_{N-2}})|.
\end{eqnarray*}
If $C^2(\ket{\psi}_{AB|C_1\cdots C_{N-2}})>C^2(\ket{\psi}_{C_1|ABC_2\cdots C_{N-2}})$, then
\begin{eqnarray*}
\mathcal{N}^2(|\psi\rangle_{ABC_1|C_2\cdots C_{N-2}}) &\geq &C^2(\ket{\psi}_{AB|C_1\cdots C_{N-2}})-C^2(\ket{\psi}_{C_1|ABC_2\cdots C_{N-2}})\\
&\geq& \frac{2}{r(r-1)}\widetilde{\N}^2(\ket{\psi}_{AB|C_1\cdots C_{N-2}})- \sum\limits_{j\in J}\widetilde{\N}_a^2(\rho_{C_1j}),
\end{eqnarray*}
where the second inequality is due to   Eq.(\ref{1}) and Eq.(\ref{NCNa}).
Combine with  Eq.(\ref{Thm}), we can obtain the inequality (\ref{thm3 1}).

If $C^2(\ket{\psi}_{AB|C_1\cdots C_{N-2}})<C^2(\ket{\psi}_{C_1|ABC_2\cdots C_{N-2}})$, then
\begin{eqnarray*}
\mathcal{N}^2(|\psi\rangle_{ABC_1|C_2\cdots C_{N-2}}) &\geq & C^2(\ket{\psi}_{C_1|ABC_2\cdots C_{N-2}})-C^2(\ket{\psi}_{AB|C_1\cdots C_{N-2}})\\
&\geq& \sum\limits_{j\in J}\widetilde{\N}^2(\rho_{C_1j})-\widetilde{\N}^2(\ket{\psi}_{AB|C_1\cdots C_{N-2}}),
\end{eqnarray*}
where the second inequality is due to   Eq.(\ref{NCNa}).
Combine with  Eq.(\ref{Thm2}),  the inequality (\ref{thm3 2}) holds.
\end{proof}

Similar to Theorem 2, we also have an upper bound of $\mathcal{N}^2(|\psi\rangle_{ABC_1|C_2\cdots C_{N-2}})$.
\begin{Thm}
For any $N$-qubit pure state $|\psi\rangle_{ABC_1C_2\cdots C_{N-2}},$ we have
\begin{eqnarray}\label{thm4}\nonumber
\mathcal{N}^2(|\psi\rangle_{ABC_1|C_2\cdots C_{N-2}}) &\leq & \frac{r(r-1)}{2}[2\widetilde{\mathcal{N}}_a^2(\rho_{AB})+
\sum_{i=1}^{N-2}\widetilde{\mathcal{N}}_a^2(\rho_{AC_i})\\
&&+\sum_{i=1}^{N-2}\widetilde{\mathcal{N}}_{a}^2(\rho_{BC_i})+\sum_{j\in J}\widetilde{\mathcal{N}}^2_{a}(\rho_{C_1j})],
\end{eqnarray}
where $J$ and $\rho_{C_1j}$ are defined as in Theorem 3, and $r$ is the Schmidt rank of the pure state $|\psi\rangle_{ABC_1|C_2\cdots C_{N-2}}.$
\end{Thm}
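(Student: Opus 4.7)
The plan is to adapt the strategy of Theorem 2 to the three-block partition $ABC_1|C_2\cdots C_{N-2}$ by first converting negativity to concurrence on the large partition, then splitting off $C_1$ from $AB$ using the concurrence subadditivity relation~(\ref{Cg}), and finally bounding each resulting concurrence term by assistance-type CREN quantities via~(\ref{NCNa}) and~(\ref{ca=na}).

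First I would invoke the Eltschka-type bound of~(\ref{1}), now applied to the pure state viewed under the partition $ABC_1|C_2\cdots C_{N-2}$, to obtain
\begin{equation*}
\mathcal{N}^2(|\psi\rangle_{ABC_1|C_2\cdots C_{N-2}}) \leq \frac{r(r-1)}{2}\,\mathcal{C}^2(|\psi\rangle_{ABC_1|C_2\cdots C_{N-2}}),
\end{equation*}
where $r$ is the Schmidt rank of the bipartition. This transfers the whole problem onto the squared concurrence across $ABC_1|C_2\cdots C_{N-2}$.

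Next I would apply~(\ref{Cg}) with the relabeling $A\mapsto AB$, $B\mapsto C_1$, treating the remaining $C_2,\ldots,C_{N-2}$ as the auxiliary block. This yields
\begin{equation*}
\mathcal{C}^2(|\psi\rangle_{ABC_1|C_2\cdots C_{N-2}}) \leq C^2(|\psi\rangle_{AB|C_1C_2\cdots C_{N-2}}) + C^2(|\psi\rangle_{C_1|ABC_2\cdots C_{N-2}}).
\end{equation*}
The first summand is exactly the quantity already estimated inside the proof of Theorem 2, so I would reuse inequality~(\ref{2}) verbatim to get $C^2(|\psi\rangle_{AB|C_1\cdots C_{N-2}}) \leq 2\widetilde{\N}_a^2(\rho_{AB})+\sum_{i=1}^{N-2}(\widetilde{\N}_a^2(\rho_{AC_i})+\widetilde{\N}_a^2(\rho_{BC_i}))$. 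For the second summand, I would apply the right-hand inequality of~(\ref{NCNa}) with $A$ played by $C_1$ and the other parties playing the role of the $B_i$'s, giving $C^2(|\psi\rangle_{C_1|ABC_2\cdots C_{N-2}}) \leq \sum_{j\in J}\widetilde{\N}_a^2(\rho_{C_1j})$ with $J=\{A,B,C_2,\ldots,C_{N-2}\}$.

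Adding the two upper bounds for the concurrences and multiplying through by $\tfrac{r(r-1)}{2}$ then produces exactly~(\ref{thm4}). I expect no real obstacle beyond careful index bookkeeping; the only subtle point is confirming that $J$ ranges over the correct $N-1$ parties complementary to $C_1$ so that the application of~(\ref{NCNa}) is legitimate, and that the Schmidt rank $r$ here refers to the $ABC_1$ partition as stated. Everything else is a direct chain of the already-established inequalities.
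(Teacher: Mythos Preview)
Your proposal is correct and follows essentially the same route as the paper: apply the Eltschka bound~(\ref{1}) at the $ABC_1|C_2\cdots C_{N-2}$ partition, split the squared concurrence via~(\ref{Cg}) with the relabeling $A\mapsto AB$, $B\mapsto C_1$, and then bound the two pieces by~(\ref{2}) and the right inequality of~(\ref{NCNa}) respectively. The paper phrases the last step as ``combine with Eq.~(\ref{Thm2}) and Eq.~(\ref{NCNa})'', which is exactly your use of the intermediate concurrence bound~(\ref{2}) from the proof of Theorem~2 together with the CRENOA monogamy for the $C_1$ party.
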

\begin{proof}
From Eq.(32) in \cite{Elt},
we have
\begin{eqnarray*}\label{}
\mathcal{N}^2(|\psi\rangle_{ABC_1|C_2\cdots C_{N-2}}) &\leq & \frac{r(r-1)}{2}\mathcal{C}^2(|\psi\rangle_{ABC_1|C_2\cdots C_{N-2}})\\
&\leq& \frac{r(r-1)}{2}( C^2(\ket{\psi}_{AB|C_1\cdots C_{N-2}})+C^2(\ket{\psi}_{C_1|ABC_2\cdots C_{N-2}})).
\end{eqnarray*}
Combine with Eq.(\ref{Thm2}) and Eq.(\ref{NCNa}), the inequality (\ref{thm4}) can be deserved.
\end{proof}

\begin{exm}
For the  $N$-qubit generalized W-class states (\ref{w-state}), we have
\begin{equation}\label{}
  \N^2(|W\rangle_{A_1A_2A_3|A_4 \cdots A_N})=4\sum\limits_{i=1}^3 |a_i|^2\sum\limits_{j=4}^N |a_j|^2.
\end{equation}
The lower bound of $\N^2(|W\rangle_{A_1A_2A_3|A_4 \cdots A_N})$ (Eq.(\ref{thm3 2})) is
$$4||a_3|^2-|a_1|^2-|a_2|^2|\sum\limits_{i=4}^N |a_i|^2-8|a_1|^2|a_2|^2,$$
and the upper bound of $\N^2(|W\rangle_{A_1A_2A_3|A_4 \cdots A_N} $ (Eq.(\ref{thm4})) is
$$8|a_1|^2|a_2|^2+8|a_3|^2(|a_1|^2+|a_2|^2)+4(|a_1|^2+|a_2|^2+|a_3|^2)\sum\limits_{i=4}^N |a_i|^2.$$
When $a_1=a_2=0$, the  lower bound  and upper bound  of $\N^2(|W\rangle_{A_1A_2A_3|A_4 \cdots A_N})$  are equal.
\end{exm}

\noindent{\bf Discussion}\, \,
We have discussed the generalized monogamy relations of negativity for $N$-qubit systems. The generalized monogamy inequalities provide the lower and upper bounds of $\mathcal{N}(|\psi\rangle_{AB|C_1\cdots C_{N-2}})$ by using the CREN and the CRENOA. When the state $\mathcal{N}(|\psi\rangle_{AB|C_1\cdots C_{N-2}})$ has Schmidt rank two, Corollary 2
gives some  monogamy relations among $\mathcal{N}^2(|\psi\rangle_{AB|C_1\cdots C_{N-2}})$, $\mathcal{N}^2(|\psi\rangle_{A|BC_1\cdots C_{N-2}})$ and $\mathcal{N}^2(|\psi\rangle_{B|AC_1\cdots C_{N-2}})$.  Take, for example, the  $N$-qubit generalized $W$-class states (\ref{w-state}),  we calculate the lower and upper bounds of  $\mathcal{N}^2(|W\rangle_{AB|C_1\cdots C_{N-2}})$ and monogamy relations (\ref{cor2i})-(\ref{r3}).  We then generalize these results to $N$-qubit pure state under partition  $ABC_1|C_2\cdots C_{N-2}$.

Entanglement monogamy is a fundamental property of multipartite entangled states. The generalized monogamy relations  maybe test some higher-dimensional quantum systems. We believe that these generalized monogamy inequalities can be useful in quantum information theory. When we complete our paper, we find that the result (Theorem 2) in this paper is discussed in \cite{Tian}. But the proof in \cite{Tian} is valid only for Schmidt rank two. If Schmidt rank is not two, the theorem 2 in \cite{Tian} is not correct. In our paper, we have a coefficient beside the inequality. We use an example with Schmidt rank 3 to illustrate.
\begin{exm}
For a pure state $\ket \psi_{ABCD}$ in an four-qubit system:
\begin{equation}\label{}
  \ket \psi_{ABCD}=\frac{1}{\sqrt{3}}(\ket{0000}+\ket{0101}+\ket{1010}),
\end{equation}
the Schmidt rank of $ \ket \psi_{AB|CD}$ is $3$, and the  negativity
$ \mathcal{N}(\ket \psi_{AB|CD})=2.$

Besides, we have
$\rho_{AB}=\rho_{AD}=\rho_{BC}=\frac{1 }{3}(\ket{00}\bra{00}+\ket{01}\bra{01}+\ket{10}\bra{10})$,
and $\rho_{AC}=\rho_{BD}=\frac{1 }{3}(2\ket{00}\bra{00}+\ket{00}\bra{11}+\ket{11}\bra{00}+\ket{11}\bra{11})$.
Hence, we can obtain $\widetilde{\N}_a(\rho_{AB})=\widetilde{\N}_a(\rho_{AD})=\widetilde{\N}_a(\rho_{BC})=\frac{2}{3}$,
and
$\widetilde{\N}_a(\rho_{AC})=\widetilde{\N}_a(\rho_{BD})=\frac{\sqrt{3+2\sqrt{2}}}{3}+\frac{\sqrt{3-2\sqrt{2}}}{3}.$
A direct calculation shows that
$$\mathcal{N}^2(\ket \psi_{AB|CD})=4,$$
$$\frac{r(r-1)}{2}[2\widetilde{\N}_a^2(\rho_{AB})+\sum_{i=1}^{N-2}(\widetilde{\mathcal{N}}_a^2(\rho_{AC_i})
+\widetilde{\mathcal{N}}_{a}^2(\rho_{BC_i})]=3\times\frac{32}{9}=\frac{32}{3}\approx 10.67,$$
so inequality (\ref{Thm2}) hold. Without the  coefficient $\frac{r(r-1)}{2}$, the inequality
$\mathcal{N}^2(\ket \psi_{AB|CD})\leq [2\widetilde{\N}_a^2(\rho_{AB})+\sum_{i=1}^{N-2}(\widetilde{\mathcal{N}}_a^2(\rho_{AC_i})
+\widetilde{\mathcal{N}}_{a}^2(\rho_{BC_i})]$ in  \cite{Tian}, i.e.
$4\leq \frac{32}{9}$ does not hold.
\end{exm}

In paper  \cite{Tian}, they also discuss the lower bound of $\mathcal{N}^2(\ket \psi_{AB|C_1\cdots C_{N-2}})$ for a $N$-qubit pure state $\ket \psi_{AB|C_1\cdots C_{N-2}}$. Their  result was shown by theorem 3 in  \cite{Tian}:
\begin{equation}\label{tian lower}
\mathcal{N}^2(|\psi\rangle_{AB|C_1\cdots C_{N-2}}) \geq | \sum_{i=1}^{N-2}[\widetilde{\mathcal{N}}_a^2(\rho_{AC_i})-
\widetilde{\mathcal{N}}_{a}^2(\rho_{BC_i})]|.
\end{equation}
But an  important relation in their proof (\cite{Tian}, Eq.(47)) does not always hold.
Consider the following counter-example.
\begin{exm}
For the  four-qubit pure state $\ket \psi_{ABCD}=\frac{1}{\sqrt{2}}(\ket{0000}+\ket{1011})$,
we have $\mathcal{N}^2(|\psi\rangle_{AB|CD})=(2\times \frac{1}{\sqrt{2}}\times \frac{1}{\sqrt{2}})^2=1 $.
In addition, $\rho_{AC}=\rho_{AD}=\frac{1}{2}(\ket{00}\bra{00}+\ket{11}\bra{11})$,
$\rho_{BC}=\rho_{BD}=\frac{1}{2}(\ket{00}\bra{00}+\ket{01}\bra{01})$. A direct calculation give that
$$\widetilde{\N}_a(\rho_{AC})=\widetilde{\N}_a(\rho_{AD})=1, \hspace{5mm}  \widetilde{\N}(\rho_{AC})=\widetilde{\N}(\rho_{AD})=0,$$
$$\widetilde{\N}_a(\rho_{BC})=\widetilde{\N}_a(\rho_{BD})=0, \hspace{5mm}  \widetilde{\N}(\rho_{BC})=\widetilde{\N}(\rho_{BD})=0.$$
Putting these values into Eq.(\ref{tian lower}), we get $1\geq 2$, this is a contradiction.

Our lower bound of $\mathcal{N}^2(|\psi\rangle_{AB|CD})$ is
$$\max \{\sum_{i=1}^{2}[\widetilde{\mathcal{N}}^2(\rho_{AC_i})-\widetilde{\mathcal{N}}_{a}^2(\rho_{BC_i})], \sum_{i=1}^{2}[\widetilde{\mathcal{N}}^2(\rho_{BC_i})-\widetilde{\mathcal{N}}^2_{a}(\rho_{AC_i})]\}
=0.$$

\end{exm}

\vspace{2.5ex}
\noindent{\bf Acknowledgments}\, \,
This work is supported by the NSFC under numbers 11405060, 11475178 and 11571119. It is a pleasure to thank Prof. Shao-Ming Fei for helpful discussion and opinion.

\end{document}